\documentclass[10pt,twocolumn]{IEEEtran}
\makeatletter
\def\ps@headings{%
\def\@oddhead{\mbox{}\scriptsize\rightmark \hfil \thepage}%
\def\@evenhead{\scriptsize\thepage \hfil \leftmark\mbox{}}%
\def\@oddfoot{}%
\def\@evenfoot{}}
\makeatother
\pagestyle{headings}
\usepackage[dvips]{graphicx}
\usepackage{amssymb}
\usepackage{mathrsfs}

\def\bfx{{\bf x}}

\def\bfone{{\bf 1}}
\newcommand{\lengthx}[1]{\vert #1 \vert}

\newtheorem{result}{\bf{Result}}
\newtheorem{lemma}{\bf{Lemma}}
\newtheorem{theorem}{\bf{Theorem}}

\title{Analysis of Geometric Disaster Evaluation Model for Physical Networks}
\author{Hiroshi Saito\\
NTT Network Technology Laboratories\\
 3-9-11, Midori-cho, Musashino-shi, Tokyo 180-8585, Japan\\
E-mail: saito.hiroshi@lab.ntt.co.jp, URL: http://www9.plala.or.jp/hslab/\\
Phone: +81 422 59 4300, Fax: +81 422 59 5671.}

\date{}

\begin{document}

\maketitle

\begin{abstract}
A geometric model of a physical network affected by a disaster is proposed and analyzed using integral geometry (geometric probability). This analysis provides a theoretical method of evaluating performance metrics, such as the probability of maintaining connectivity, and a network design rule that can make the network robust against disasters.

The proposed model is of when the disaster area is much larger than the part of the network in which we are interested.
Performance metrics, such as the probability of maintaining connectivity, are explicitly given by linear functions of the perimeter length of convex hulls determined by physical routes.
The derived network design rule includes the following.
(1) Reducing the convex hull of the physical route reduces the expected number of nodes that cannot connect to the destination. 
(2) The probability of maintaining the connectivity of two nodes on a loop cannot be changed by changing the physical route of that loop. 
(3) The effect of introducing a loop is identical to that of a single physical route implemented by the straight-line route.

\end{abstract}
\begin{IEEEkeywords}
Disaster, network survivability, network design, network architecture, integral geometry, geometric probability, probability of maintaining connectivity, network availability, network reliability, network failure.
\end{IEEEkeywords}

\section{Introduction}
A massive earthquake occurred on March 11, 2011 in northeastern Japan.
The earthquake and the resulting tsunami destroyed everything including network facilities.
According to the press release from NTT \cite{ntt}, ^^ ^^ Facilities were damaged and commercial power supply was disrupted at exchange offices, among other things, impacting approximately 1.5 million circuits for fixed-line services, approximately 6,700 pieces of mobile base-station equipment, approximately 15,000 circuits for corporate data communication services, and others." 
It was also reported that transmission lines were disconnected in 90 routes, 18 exchange office buildings were destroyed, and 23 buildings were submerged, approximately 65,000 telephone poles were destroyed, and submersion and physical damage to aerial cables reached about 6,300 kilometers. 
Such severe damage and problems in a network cause secondary negative effects over a disaster area, for example, emergency calls not being able to go through.

One may assume that such a huge disaster cannot re-occur in the future.
Unfortunately, this is not the case.
A similar level of damage and destruction in telecommunications networks due to the earthquake on May 12, 2008 in China was reported \cite{china}.
Minor damage due to disasters can occur anywhere and at any moment.
Therefore, it is important to construct networks robust against disaster.
Higher layer functions are normally useless when the physical connectivity of fibers is lost or network node buildings are lost.
The physical design, including the geographical design of the network and its evaluation, plays a central role.

Through such experiences, network operators have made efforts to increase network robustness against disasters.
For example, a large earthquake struck in 1968 in northern Japan triggering the expansion of microwave transit systems as a means of increasing network survivability and the number of physical routes because the earthquake disconnected the network in Hokkaido, the second largest island in Japan, from the main island of Honshu due to the disconnection of the submarine cable between the two islands \cite{ntt-east}.
Another earthquake in 1993 resulted in the development of a transportable earth station of a satellite communication system \cite{ntt-east}.
Nevertheless, we do not have mathematical models or mathematic frameworks for handling the effects of disasters striking in an unknown geographical location or area.
This paper responds to the needs of such models or frameworks.

There have been a large number of theoretical papers published evaluating the reliability, availability, and survivability for a given network.
However, most of these papers focus on a single failure (or independent failures) of a network node or a link.
They take into account the topology of the network rather than its physical shape.
For a given set of network topologies and failure rates of network entities, evaluation of a metric, such as the probability that a pair of network entities can be connected, is a typical example \cite{book}.
In a disaster, the assumption of a single or independent failure is not valid.
In addition, the physical shape of a network is important for evaluating the impact of a disaster on network survivability; however, most studies have not covered this point.

Several studies of network survivability by taking into account correlated failure and geometric/geographical conditions have been reported.
Grubesic \cite{geography} evaluated the network survivability of the current Internet based on geographical data.
Although he focused on the physical route of a network, it was a case study, and no mathematical models or methods were provided.
Liew and Lu \cite{survivability} proposed a framework to evaluate network survivability during a disaster and introduced a survivability function to various metrics.
Although their framework can introduce correlated failures, they did not propose any method or model of correlations, and they assumed that the independence of failures and the failure rate was proportional to the link length.
Furthermore, they did not consider the physical shape of the disaster area or that of the network due to the lack of a mathematical framework.
Wu et al. \cite{underseaCableFailure} discussed the optimization of the physical route of an undersea cable by assuming a disk-shaped disaster area.
By assuming a rectangular route, the length of an edge is determined by minimizing cost while maintaining a higher probability of connecting two cities than the threshold.

Taking into account a disaster area, the minimum number of cuts disconnecting the source and sink nodes was discussed in the following papers.
As far as we know, Bienstock \cite{OR} initiated the study of this problem.
Algorithms computing the minimum number of disaster areas disconnecting the source and sink nodes were investigated when all the edges intersecting the disaster areas are removed.
Sen et al. \cite{sen} proposed a region-based connectivity as a metric for fault-tolerance.
Assuming the region is a disk-shaped disaster area, polynomial time algorithms calculating region-based connectivity are provided.
Neumayer et al. \cite{discMinCut} discussed the geographical min-cut, defined as the minimum number of disk-shaped disaster areas to disconnect a pair of nodes, and the geographical max-flow, defined as the maximum number of paths that are not disconnected by a single disaster area, and showed that geographical min-cut is not equal to geographical max-flow.
Agarwal et al. studied algorithms that find a disaster location having the highest expected impact on a network, where the impact is defined by various metrics such as the number of failed components \cite{wdmFailure}.

Recently, Neumayer et al. published two papers intended to cover network survivability in a disaster \cite{failureToN}, \cite{failureINFOCOM}.
In one of their introductions \cite{failureToN}, they stated that, ^^ ^^ we are the first to attempt to study this problem."
In their network model, there is a set of line segments of which end points are locations of network center buildings and the disaster model is a line segment or a circle \cite{failureToN}.
They proposed to use an optimization technique to find the worst case disaster.
On the other hand, Neumayer and Modiano \cite{failureINFOCOM} used geometric probability (integral geometry) to model the randomness of a disaster.
Their network model is, again, a set of line segments of which end points are locations of network center buildings and the disaster model is a line.
These papers also emphasize a polynomial order algorithm to evaluate metrics.

Forecasting the time, even the location and size of a disaster, is often difficult.
For example, although research on earthquakes in Japan is quite extensive, the size of a disaster area caused by the earthquake on March 11, 2011 was far beyond expectations.
Towns that were expected to be free from tsunami were engulfed and districts that had never experienced a strong earthquake were affected by intense shaking.
This shows that the frequency of disasters is too low to make an accurate disaster area model based on statistics. There are also many types of disasters, such as earthquakes, tsunami, hurricanes, landslides, wild fires, tornados, extraordinarily heavy snowfall, and electromagnetic pulses. Furthermore, network operators have many subnetworks such as subscriber and regional networks.
There are too many combinations of disaster and subnetworks to evaluate all cases individually.
Therefore, it is beneficial for network operators to use a method for assessing network survivability applicable to generic subnetworks without specific disaster information and to provide a rule of thumb for designing robust networks. 
Based on this motivation, a disaster area is modeled as an area randomly placed around a network.
With extensive use of integral geometry (geometric probability), more generic network models than those used by Neumayer and Modiano \cite{failureINFOCOM} can be covered and explicit formulas for performance metrics, such as the probability that a pair of nodes still maintain connections, can be derived.
The analyzed results lead us to a network design that can improve robustness against a disaster.

Integral geometry (geometric probability) used in this paper is a mathematical method for evaluating the measures that a certain set (normally a subset of a plane) satisfies certain characteristics and has been used in several papers regarding sensor networks as well as that by Neumayer and Modiano \cite{failureINFOCOM}.
For example, a series of papers \cite{infocom}, \cite{mobileComp}, \cite{signal}, based on analysis using integral geometry, proposed shape estimation methods for a target object based on reports from sensor nodes of unknown locations.
Lazos et al. \cite{detection} and Lazos and Poovendran \cite{lazos} directly applied the results of the integral geometry discussed in Chapter 5 Section 6.7 of \cite{Santalo} to the analysis of detecting an object moving in a straight line and to the evaluation of the probability of $k$-coverage. Kwon and Shroff \cite{routing} also applied integral geometry to the analysis of straight-line routing, which is an approximation of the shortest path routing, and Choi and Das \cite{energy} used it to determine sensors in energy-conserving data gathering.

The contributions of this paper are as follows. (1) This paper discusses a theoretical method for evaluating network survivability metrics, such as connectivity probability, when the physical network and disaster area shapes are given. The information on the location of the disaster area is not needed. (2) This theoretical method explicitly reveals the relationships between network survivability metrics and physical network shape, when the disaster area is geographically much larger than the length between two nodes of interest and its boundary is macroscopically a line. For example, the probability of connectivity between two nodes connected by a single route is a linear function of the perimeter length of the convex hull of the physical route. (3) By using the relationships between network survivability metrics and physical network shape, principles useful in designing a physical route of a network are derived. For example, (i) reducing the convex hull of the physical route reduces the expected number of nodes disconnected to the destination; (ii) the probability of maintaining the connectivity of two nodes on a ring network is insensitive to the physical route shape of that network; (iii) although the straight-line route maximizes the connection probability when a single route is used, the effect of introducing a loop is identical to that when the straight-line route is not practical. As a result, we can obtain a physical network shape that is robust against a disaster and the rule of thumb for designing such a physical network. (4) The proposed simple model, as disaster area shape, is valid when the bumps of its boundary are small or the interval of two bumps are short compared with the mean distance between two nodes.

The organization of this paper is as follows. Section II introduces a basic explanation of integral geometry and geometric probability and derives a basic theorem for the analysis in later sections.
Section III explains the proposed model.
Section IV discusses the analysis results when the disaster area is much larger than the area of interest, and Section V discusses numerical examples.
Section VI concludes the paper. 

\section{Preliminaries}
To analyze the relationship between the shape of the network and the effect of a disaster, this section provides mathematical preliminaries.

\subsection{Introduction of integral geometry and geometric probability} \label{integral-geometry}
The concept of integral geometry and geometric probability \cite{Santalo} is introduced as a preliminary for evaluating a disaster occurring at a random location. 


For a bounded set $K$ in 2-dimensional space $\mathbb{R}^2$, we can define the motion-invariant measure of the set of positions of $K$ satisfying the condition $X$, where the position of $K$ is determined by the position of its reference point $(x,y)$ and by its direction $\theta$ formed by a reference line fixed in $K$ with another reference line fixed to the fixed coordinates.
An example of $X$ is $K\cap K_0\neq \emptyset$ for a given $K_0$.
By moving $K$ over a domain satisfying $X$ in the parameter space of $(x,y,\theta)$, we obtain the motion-invariant measure $m(K;X)=\int_X dK=\int_X dx\, dy\,  d\theta$ of the set of positions of $K$ satisfying $X$.
That is, $m(K;X)$ is the area size of the parameter space of $(x,y,\theta)$ satisfying $X$.
This ^^ ^^ moving $K$" is denoted as $dK$ and called the kinematic density.
In fact, $dK=dx\,dy\,d\theta$.  That is, ^^ ^^ moving $K$" is ^^ ^^ moving $(x,y,\theta)$".
Consequently, the measure $m(K;X)$ is defined by the size of the area in which the positions of $K$ satisfy $X$ in the parameter space of $(x,y,\theta)$.

Once we have defined the measures $m(K;X)$ and $m(K;Y)$ ($Y\subseteqq X$), the probability that the position of $K$ satisfying $Y$ among the position of $K$ satisfying $X$ is defined by the quotient of measures $m(K;Y)/m(K;X)$.
This is called a geometric probability.
That is, the geometric probability is proportional to the measure $m(K;Y)$ and is normalized by $m(K;X)$.
In this sense, the measure $m(K;Y)$ is a non-normalized probability when the parameters $(x,y)$ and $\theta$ of $K$ move uniformly over the parameter space.

Similarly, we can define the measure of a set of lines.
Consider a line $G$ determined by the angle $\theta$, in which the direction perpendicular to $G$ makes a fixed angle with the positive part of the $x$-axis ($-\pi \leq \theta \leq \pi$), and by its distance $p$ from the origin $O$ ($0\leq p$) (Fig. \ref{lineset}). That is, $G$ is specified by the coordinates $(p,\theta)$. 
By using the coordinates $(p,\theta)$ and the criterion of motion-invariance, the motion-invariant measure of the set of lines $G(p,\theta)$ satisfying $X$ is defined by the simple integral form $m(G;X)=\int_X dp\,d\theta$ \cite{Santalo}.
(We can use another parameterization, but we cannot use this simple integral form.
This is because integral geometry requires the calculated results to be invariant under the group of motions in the plane and because another parameterization requires a complicated form to make $m(G;X)$ motion-invariant.)

\begin{figure}[htb] 
\begin{center} 
\includegraphics[width=6cm,clip]{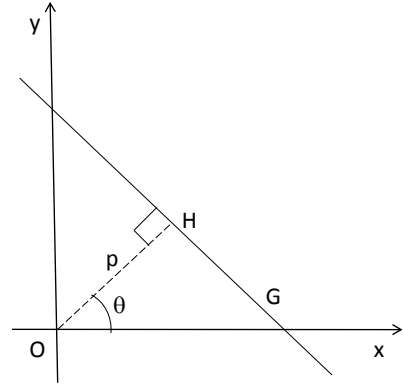} 
\caption{Parameterization of line} 
\label{lineset} 
\end{center} 
\end{figure}

When we consider a strip $B$ with breadth $w$, its position should be parameterized by the position of its midparallel line and its direction.
(The strip is the area between two parallel lines.)
Thus, $(p,\theta)$ of the midparallel line can be used for $B$.
Then, the integral form $m(B;X)=\int_X dp\,d\theta$ can also be used to evaluate the measure $m(B;X)$.
A randomly placed $B$ or a randomly placed $G$ takes the position determined by the parameters $(p,\theta)$ of $B$ or $G$, which uniformly take values over the parameter space.

Throughout this paper, any boundary of a set in $\mathbb{R}^2$ is smooth and differentiable except for the finite number of points.

\subsection{Lemma and theorem}
This subsection provides a lemma and a theorem for the analysis in the following sections.
For a bounded area $C$ in $\mathbb{R}^2$, a directional $G$, and a directional $B$ with $w$, we use the following notations in the remainder of this paper. 
\begin{itemize}
\item $\partial C$: boundary of $C$, 
\item $\lengthx{C}$: perimeter length of $C$, 
\item $\overline C$: convex hull of $C$,
\item $R_G$: right-half plane (the plane of the right side) of $G$,
\item $R_B$ : plane of the right side of and outside $B$,
\item $G_B$: left-side boundary line of $B$.
\end{itemize}
($R_G$ and $R_B$ are located at the right of line $G$ or strip $B$ when the direction of $G$ or $B$ is upward.)
Note that $G_B$ is the boundary of $B\cup R_B$.
That is, $R_{G_B}=(B\cup R_B)$.

In later sections, $C$ in the following lemma and theorem is, for example, a route between two nodes.
Then, the following theorem provides a probability that a route intersects a disaster area.

\begin{lemma}\label{lemma1}
Let $C$ be a set in $\mathbb{R}^2$.
The event $\{R_{G_B} \cap C= \emptyset\}$ is equivalent to $\{R_{G_B}  \cap \overline{C}= \emptyset\}$, and the event $\{R_G \cap C= \emptyset\}$ is equivalent to $\{R_G\cap \overline{C}= \emptyset\}$.
\end{lemma}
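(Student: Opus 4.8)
The plan is to reduce both equivalences to a single statement about an arbitrary half-plane, since the only property of $R_G$ and $R_{G_B}$ that I will use is that each is a half-plane. For $R_G$ this is immediate from its definition as the right-half plane of the line $G$. For $R_{G_B}$ I will use the identity $R_{G_B}=(B\cup R_B)$ noted just before the lemma, which says that $R_{G_B}$ is exactly the set of points lying on the right side of the line $G_B$; hence it too is a half-plane. Thus it suffices to prove: for any half-plane $H$ and any set $C\subseteq\mathbb{R}^2$, the event $\{H\cap C=\emptyset\}$ is equivalent to $\{H\cap\overline{C}=\emptyset\}$.

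One direction is immediate. Since $C\subseteq\overline{C}$, we have $H\cap C\subseteq H\cap\overline{C}$, so $H\cap\overline{C}=\emptyset$ forces $H\cap C=\emptyset$.

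For the converse, suppose $H\cap C=\emptyset$, i.e.\ $C$ lies entirely in the complement $H^c$. The key observation is that the complement of a half-plane is again a half-plane and is therefore convex. Since $\overline{C}$ is by definition the smallest convex set containing $C$, and $H^c$ is a convex set containing $C$, we conclude $\overline{C}\subseteq H^c$, which is exactly $H\cap\overline{C}=\emptyset$. Applying this to $H=R_G$ and to $H=R_{G_B}$ gives both claims.

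I do not expect a genuine obstacle here; the content of the lemma is essentially that a half-plane separates $C$ from the rest of the plane if and only if it separates the convex hull $\overline{C}$. The only point requiring care is the recognition that $R_{G_B}$ is a half-plane (so that its complement is convex), which is supplied by the stated identity $R_{G_B}=(B\cup R_B)$. If one prefers to avoid invoking the ``smallest convex superset'' property abstractly, the inclusion $\overline{C}\subseteq H^c$ can instead be verified directly: writing $H^c=\{x:\langle n,x\rangle<c\}$ for a suitable normal $n$ and constant $c$, every point of $C$ satisfies $\langle n,x\rangle<c$, and linearity of $\langle n,\cdot\rangle$ propagates this strict inequality to every convex combination of points of $C$, hence to all of $\overline{C}$ (the opposite orientation of $H$ being handled identically with non-strict inequalities).
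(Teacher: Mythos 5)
Your proof is correct and follows exactly the idea the paper relies on: the paper simply remarks that the lemma is ``almost trivial because $R_{G_B}$ or $R_G$ is a half-plane,'' and your argument is the natural fleshing-out of that remark (the complement of a half-plane is convex, so it contains $C$ if and only if it contains $\overline{C}$). No discrepancy with the paper's approach.
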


This is almost trivial because $R_{G_B}$ or $R_G$ is a half-plane.

\begin{theorem}
Let $C$ be a set in $\mathbb{R}^2$ and assume that $C\subseteqq A_0$ where $A_0$ (area of interest) is bounded and convex.

Then, 
\begin{equation}
\Pr(R_{G_B} \cap C= \emptyset\vert B\cap A_0\neq \emptyset)=\frac{\lengthx{A_0}-\lengthx{\overline{C}}}{2\lengthx{A_0}+2\pi w}.\label{pr-1}
\end{equation}
\end{theorem}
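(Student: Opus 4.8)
The plan is to compute the conditional probability directly from its definition as a ratio of kinematic measures,
$\Pr(R_{G_B}\cap C=\emptyset\mid B\cap A_0\neq\emptyset)=\frac{m(B;\,\{R_{G_B}\cap C=\emptyset\}\cap\{B\cap A_0\neq\emptyset\})}{m(B;\,B\cap A_0\neq\emptyset)}$,
each measure being $\int dp\,d\theta$ over the admissible positions of the directed strip. The first move is to apply Lemma \ref{lemma1} to replace $\{R_{G_B}\cap C=\emptyset\}$ by $\{R_{G_B}\cap\overline{C}=\emptyset\}$, so that only the two convex bodies $\overline{C}$ and $A_0$ enter the calculation; since $C\subseteqq A_0$ and $A_0$ is convex, $\overline{C}\subseteqq A_0$ as well, a fact I will need below.

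Next I would fix the direction $\theta$ and integrate over the positional coordinate $p$ of the strip, reducing each geometric event to an interval in $p$. Writing $h_K(\theta)$ for the extent of a convex body $K$ in the rightward normal direction of the strip (its support function), the line $G_B$ is the left boundary of $B$, so $R_{G_B}$ is the half-plane on the strip's right; hence $\{\overline{C}\cap R_{G_B}=\emptyset\}$ says that $\overline{C}$ lies strictly to the left of $G_B$, i.e. $p>h_{\overline{C}}(\theta)+w/2$. Likewise $\{B\cap A_0\neq\emptyset\}$ says the strip of breadth $w$ meets the $\theta$-interval occupied by $A_0$, namely $-h_{A_0}(\theta+\pi)-w/2\le p\le h_{A_0}(\theta)+w/2$. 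For the denominator this last interval alone has length equal to the width of $A_0$ in direction $\theta$ plus $w$; integrating over the full $2\pi$ range of directed lines and using the Cauchy formula $\int_0^{2\pi}h_{A_0}(\theta)\,d\theta=\lengthx{A_0}$ turns the width into $2\lengthx{A_0}$ and the constant $w$ into $2\pi w$, giving $m(B;B\cap A_0\neq\emptyset)=2\lengthx{A_0}+2\pi w$.

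For the numerator I intersect the two intervals. The upper bound is $p\le h_{A_0}(\theta)+w/2$, while the lower bound is the larger of $h_{\overline{C}}(\theta)+w/2$ and $-h_{A_0}(\theta+\pi)-w/2$; because $\overline{C}\subseteqq A_0$ forces $h_{\overline{C}}(\theta)\ge -h_{A_0}(\theta+\pi)$, the binding lower bound is $h_{\overline{C}}(\theta)+w/2$. The admissible interval is therefore $(h_{\overline{C}}(\theta)+w/2,\ h_{A_0}(\theta)+w/2]$, whose length is $h_{A_0}(\theta)-h_{\overline{C}}(\theta)$, so the breadth term $w/2$ cancels. Integrating over $\theta\in[0,2\pi)$ and applying the Cauchy formula to each body gives $\lengthx{A_0}-\lengthx{\overline{C}}$, and dividing by the denominator yields (\ref{pr-1}).

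I expect the main obstacle to be the careful positional bookkeeping rather than any deep idea: correctly orienting $G_B$ as the left boundary so that $R_{G_B}$ becomes the one-sided constraint $p>h_{\overline{C}}(\theta)+w/2$, and then checking which of the two competing lower bounds on $p$ is active. The hypothesis $C\subseteqq A_0$ with $A_0$ convex is exactly what makes $h_{\overline{C}}(\theta)+w/2$ the binding bound, so that the breadth $w$ disappears from the numerator while it survives in the denominator. This asymmetry — a one-sided support difference $h_{A_0}-h_{\overline{C}}$ in the numerator versus the full two-sided width of $A_0$ in the denominator — is precisely what produces the factor $2$ separating $\lengthx{A_0}-\lengthx{\overline{C}}$ from $2\lengthx{A_0}+2\pi w$.
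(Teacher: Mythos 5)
Your proposal is correct and follows essentially the same route as the paper: parameterize the directed strip by $(p,\theta)$, reduce each event to an interval in $p$ via support functions, and integrate with the Cauchy formula $\int h\,d\theta=\lengthx{\cdot}$, invoking Lemma \ref{lemma1} to pass to $\overline{C}$. The only cosmetic difference is that you compute the measure of $\{R_{G_B}\cap\overline{C}=\emptyset,\ B\cap A_0\neq\emptyset\}$ directly (correctly identifying $h_{\overline{C}}(\theta)+w/2$ as the binding lower bound via $\overline{C}\subseteqq A_0$), whereas the paper computes the complementary measure $\lengthx{C}+\lengthx{A_0}+2\pi w$ and subtracts from $1$.
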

\begin{proof}
First, assume that $C$ is a convex set.
Take an origin in $C$ and consider support functions $p_0=p_0(\theta)>0$ and $q_0=q_0(\theta)>0$ for $C$ and $A_0$.

In general, a support function for $X$ provides a set of lines, and each line in the set is called a line of support for $X$.
A line of support for $X$ is a line containing at least one of $X$ but such that one of the two open half planes determined by the line contains no point of $X$ (Fig. \ref{support}) \cite{dictionary}.
The following relationship is known between a support function $p_x(\theta)$ for $X$ and its perimeter length $\lengthx{X}$: $\int_{-\pi}^\pi p_x(\theta) d\theta=\lengthx{X}$ \cite{Santalo}.
For each $\theta$, the distance of a line of support from the origin is $p_x(\theta)$.

\begin{figure}[htb] 
\begin{center} 
\includegraphics[width=8cm,clip]{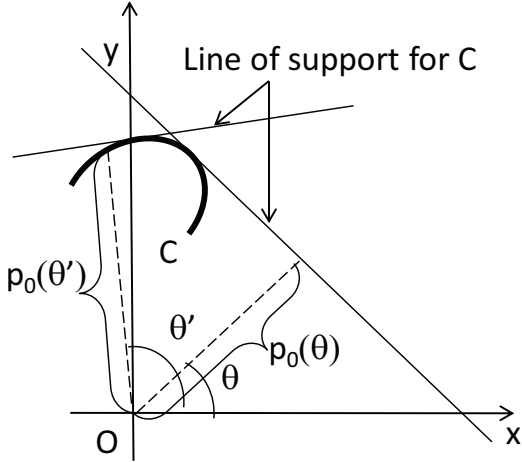} 
\caption{Support function $p_0=p_0(\theta)$} 
\label{support} 
\end{center} 
\end{figure} 

For a fixed $\theta$, the range of $p$ that satisfies $R_{G_B}  \cap C\neq \emptyset, B\cap A_0\neq \emptyset$ is $-q_0(\theta+\pi)-w/2\leq p\leq p_0(\theta)+w/2$  (See Fig. \ref{p_0}.
In this figure, the two strips correspond to the two positions of $B$.
One position corresponds to $p= p_0(\theta)+w/2$ and the other corresponds to $p=-q_0(\theta+\pi)-w/2$.
Because $B$ is directional, we consider $-\infty <p < \infty, -\pi\leq \theta<\pi$.)
Thus,
\begin{eqnarray}
&&m(B; R_{G_B} \cap C\neq \emptyset, B\cap A_0\neq \emptyset)\cr
&=&\int_{R_{G_B} \cap C\neq \emptyset, B\cap A_0\neq \emptyset} dp\   d\theta\cr
&=&\int_{-\pi}^\pi (p_0(\theta)+q_0(\theta+\pi)+w)d\theta.
\end{eqnarray}
Note that $p_0$ is a support function of $C$. Thus, $\int_{-\pi}^\pi p_0(\theta) d\theta=\lengthx{C}$.
Similarly, $\int_{-\pi}^\pi q_0(\theta+\pi) d\theta=\lengthx{A_0}$.
Therefore,
\begin{equation}
m(B; R_{G_B} \cap C\neq \emptyset, B\cap A_0\neq \emptyset)=\lengthx{C}+\lengthx{A_0}+2\pi w.\label{basic}
\end{equation}

\begin{figure}[htb] 
\begin{center} 
\includegraphics[width=8cm,clip]{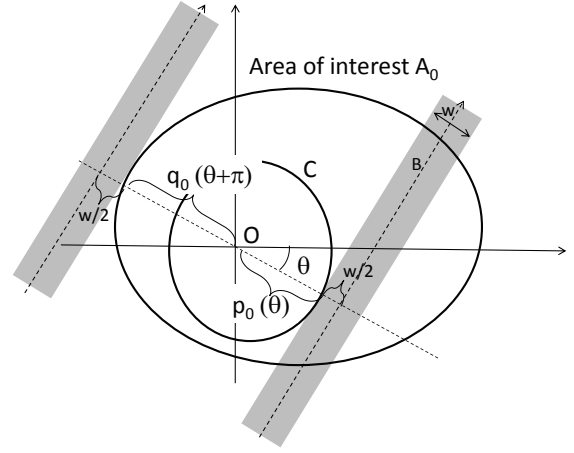} 
\caption{Derivation of $m(B; R_{G_B} \cap C\neq \emptyset, B\cap A_0\neq \emptyset)$} 
\label{p_0} 
\end{center} 
\end{figure}

Similarly,
\begin{eqnarray}
m(B; B\cap A_0\neq \emptyset)&=&\int_{B\cap A_0\neq \emptyset} dp \  d\theta\cr
&=&\int_{-\pi}^\pi q_0(\theta)+q_0(\theta+\pi)+w\, d\theta\cr
&=&2\lengthx{A_0}+2\pi w.
\end{eqnarray}
According to the definition of geometric probability,
\begin{eqnarray}
&&\Pr(R_{G_B} \cap C= \emptyset\vert B\cap A_0\neq \emptyset)\cr
&=&1-\Pr(R_{G_B}  \cap C\neq \emptyset\vert B\cap A_0\neq \emptyset)\cr
&=&1-\frac{m(B; R_{G_B}\cap C\neq \emptyset, B\cap A_0\neq \emptyset)}{m(B; B\cap A_0\neq \emptyset)}.
\end{eqnarray}
Thus, we obtain Eq. (\ref{pr-1}).

When $C$ is not a convex set, consider $\overline{C}$ instead of $C$.
Then, $\Pr(R_{G_B}  \cap \overline{C}= \emptyset\vert B\cap A_0\neq \emptyset)=\frac{\lengthx{A_0}-\lengthx{\overline{C}}}{2\lengthx{A_0}+2\pi w}$.
By using Lemma \ref{lemma1}, we obtain Eq. (\ref{pr-1}).
\end{proof}

\section{Model}
This paper investigates a physical network, such as an optical fiber network, within a bounded and convex $A_0$.
This $A_0$ is the area of interest, and disasters causing damage in part of $A_0$ are taken into account.

Let $n_i$ ($i=1,2,\cdots$) be the nodes in the network, and let $l(i,j)$ be the physical link between two consecutive nodes $n_i$ and $n_j$.
When there are $N(i,j)$ routes between nodes $n_i$ and $n_j$, let $s_k(i,j)$ be its $k$-th physical route.
The route $s_k(i,j)$ is a concatenation of physical links. For example, $s_k(i,j)$ consists of $l(i,k_1), l(k_1,k_2), l(k_2,j)$ when there are intermediate nodes $n_{k_1},n_{k_2}$ on the route between $n_i$ and $n_j$.
When there are no intermediate nodes between $n_i$ and $n_j$, $s_k(i,j)=l(i,j)$.
Path configuration $p(i,j)$ is defined as a union of the routes between $n_i$ and $n_j$.
That is, $p(i,j)\equiv \cup_{k=1}^{N(i,j)}s_k(i,j)$.

In this paper, the meaning of $s_k(i,j)$ and $p(i,j)$ is not limited to the connectivity between $n_i$ and $n_j$.
The meaning of $s_k(i,j)$ and $p(i,j)$ implies the physical route shape and its union between $n_i$ and $n_j$.
Therefore, we can define, for example, $\overline{p(i,j)}$, which is the convex hull of $p(i,j)$.

When $p(i,j)$ consists of a single route, that is, when $N(i,j)=1$ and $p(i,j)=s_1(i,j)$, we may call $p(i,j)$ a single route path.
When we call $p(i,j)$ a ring-type network, this means that there are two non-overlapping routes between $n_i$ and $n_j$.
For a ring-type network, it is assumed that there is connectivity between two nodes if at least a clockwise or counterclockwise route between these two nodes is maintained.
For a ring-type network, it is assumed that the area of which boundary is $p(i,j)$ is convex for the remainder of this paper.
(For simplicity, we denote ^^ ^^ $p(i,j)$ is convex" in the remainder of this paper. However, it formally means that the area of which boundary is $p(i,j)$ is convex.)

A network affected by a disaster is analyzed.
With no prior information of the disaster, the disaster area $D$ is modeled as a randomly placed area around a network in $\mathbb{R}^2$.
The disaster area $D$ is modeled as a realization of a spatially stationary process. 

It is assumed that the portion of the network included in $D$ does not work at all.
That is, no network elements function in $D$.

In the remainder of this paper, it is assumed that a disaster area $D\subset \mathbb{R}^2$ is geographically much larger than the length between two nodes of interest.
For example, the disaster area of a large earthquake is at least hundreds of km$^2$. Some may reach tens of thousands of km$^2$.
A large hurricane can create a disaster area larger than a hundred km$^2$.
Therefore, this assumption is useful, for example, for evaluating a disaster affecting a small subnetwork, such as a subscriber network, or for designing a robust physical route of such a network against disasters.
Because $D$ is very large, we can assume that its boundary area is macroscopically a line but microscopically has bumps.
The proposed model is a strip $B$ with a half-plane $R_B$.
That is, $B$ is a model in which the boundary is rugged within breadth $w$ (Fig. \ref{strip}).

For a directional $B$, $D$ is assumed to be $R_{G_B}$, that is $D\approx R_{G_B}$.
Because we assume that a part of the network in $R_{G_B}$ does not work at all, this model is identical to that in which the boundary of $D$ is a line $G_B$.
This assumption underestimates the probabilities of connectivities under the assumption that a part of the network in $D$ does not work at all.
However, as a network operator, this safe-side assumption is adopted to obtain a feasible rule of thumb.
To show the validity of this model of $D$, we discuss the simulation results in Section \ref{large-n}.

\begin{figure}[htb] 
\begin{center} 
\includegraphics[width=8cm,clip]{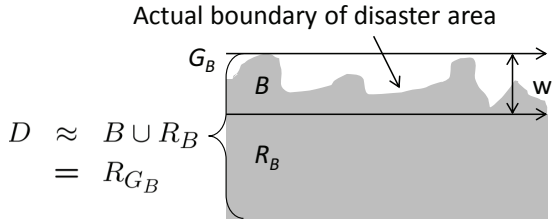} 
\caption{Strip model} 
\label{strip} 
\end{center} 
\end{figure}

The following is a list of notations used in the remainder of this paper.
\begin{itemize}
\item $i\leftrightarrow j$ means that there is connectivity between nodes $n_i$ and $n_j$,
\item $i \not \leftrightarrow j$ means that the connectivity is lost between $n_i$ and $n_j$, and
\item $\overline{(n_i,n_j)}$ means a straight-line segment of which the two end points are $n_i$ and $n_j$ (this is equivalent to the convex hull of $n_i$ and $n_j$).
\end{itemize}
Sometimes, for example, $i\leftrightarrow k\leftrightarrow j$ is used.
This means that $i\leftrightarrow k$ and $k\leftrightarrow j$.

\section{Analysis}\label{large}
In the rest of this section, cases in which $B\cap A_0\neq \emptyset$ are of focus.
To simplify the notation, the description for this area of focus is omitted in the remainder of the paper.

\noindent {\bf [Remark 1]}
This area of focus means that disasters causing damage in part of $A_0$ are taken into account but disasters that affect the entire $A_0$ are removed.
If those disasters affecting the entire $A_0$ are taken into account, the measure of $\{A_0\cap R_{G_B}\neq \emptyset\}$ is required.
However, this measure becomes infinite because $R_{G_B}$ is a half plane.
As a result, $\Pr (i\leftrightarrow j)$ cannot be appropriately determined.
Because of this technical reason, the cases in which $B\cap A_0\neq \emptyset$ are of focus.
However, even under this focus, the cases in which a subnetwork of interest is completely included in $R_{G_B}$ can be considered.
By appropriately setting $A_0$, there are no practical problems caused by this focus.

\noindent {\bf [Remark 2]}
The results in this paper are also valid when $D$ is a strip $B'$ with breadth $W$ ($W\gg w, W>d_{max}$) and $D$ satisfying $B' \cap A_0\neq \emptyset$ is the disaster we take into account (Fig. \ref{wide-strip}). Here $d_{max}$ is the maximum distance between two points in a physical path configuration. In this case, the numerator of all the results in this section are replaced with $m(B'; B'\cap A_0\neq \emptyset)=2\lengthx{A_0}+2\pi W$. If we believe that removing disasters that affect the entire $A_0$ is unnatural, this wide-strip model can be used because we can take into account disasters causing damage in part of $A_0$ as well as those damaging the entire $A_0$. However, the wide-strip model makes the proofs of the results less comprehensive than with the original model.

This validity, even under the wide strip model, is due to the equivalence of two events. The first one is that a part of the path configuration is in $R_{G_{B'}}$ and the other one is that this part is in $B'$. This equivalence is proven as follows. Because $B' \subseteqq R_{G_{B'}}$, if a part of the path configuration is in $B'$, then it is in $R_{G_{B'}}$. If a part $\Phi$ of the path configuration is in $R_{G_{B'}}$ and a point $\bfx\in \Phi$ is outside $B'$, the distance between $\bfx$ and a point outside $\Phi$ on this path configuration is larger than $W$. This contradicts the assumption that $d_{max}<W$. Therefore, if a part of the path configuration is in $R_{G_{B'}}$, then it is included in $B'$.

\begin{figure}[htb] 
\begin{center} 
\includegraphics[width=8cm,clip]{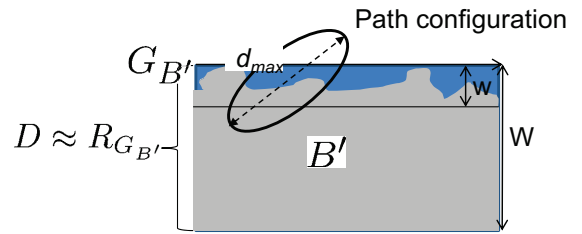} 
\caption{Wide strip model} 
\label{wide-strip} 
\end{center} 
\end{figure}

\subsection{Single route path}
\begin{result}\label{single-route-path}
When $p(i,j)$ consists of a single route, the probability $\Pr(i\leftrightarrow j)$ that there is connectivity between $n_i$ and $n_j$ is given by  
\begin{equation}
\Pr (i\leftrightarrow j)=\frac{\lengthx{A_0}-\lengthx{\overline{p(i,j)}}}{2\lengthx{A_0}+2\pi w}.\label{sl}
\end{equation}
\end{result}

\begin{proof}
The event $i\leftrightarrow j$ is equivalent to the event that any part of $p(i,j)$ is not in $R_{G_B}$.
Apply Eq. (\ref{pr-1}) to $C=p(i,j)$ and obtain Eq. (\ref{sl}).
\end{proof}

Surprisingly, $\Pr (i\leftrightarrow j)$ can be given by a simple explicit function of $p(i,j)$, and is determined only by $\lengthx{\overline{p(i,j)}}$. That is, for fixed $\lengthx{\overline{p(i,j)}}$, $\Pr (i\leftrightarrow j)$ is not dependent on the length of $p(i,j)$ or the size of $\overline{p(i,j)}$.

The result mentioned above tells us the following.
First, if there is an intermediate node $n_k$ between $n_i$ and $n_j$ (Fig. \ref{single-route}), $\Pr(i\leftrightarrow k)-\Pr(i\leftrightarrow j)$ is proportional to $\lengthx{\overline{p(i,j)}}-\lengthx{\overline{p(i,k)}}$.
That is, the decrease in the probability of maintaining connectivity due to the increase in the number of hops is proportional to the increase in the perimeter length of the convex hull of the route.
Therefore, if the increase in the perimeter length of the convex hull of the route is not large, the decrease in the probability of maintaining connectivity due to the increase in the number of hops is not large.
For example, $\Pr(i\leftrightarrow k)-\Pr(i\leftrightarrow j)$ is larger in Fig. \ref{single-route}-(a) than in Fig. \ref{single-route}-(b) because the increase in the perimeter length of the convex hull of the route is larger in the former than in the latter.
Second, $\Pr(i\leftrightarrow j)$ is smaller as $\lengthx{\overline{p(i,j)}}$ increases.
Hence, it is likely that $\Pr(i\leftrightarrow j)$ is smaller if $n_i$ and $n_j$ are far away or the route between them is more non-roundabout (Fig. \ref{single-route}-(a)) than roundabout (Fig. \ref{single-route}-(b)) for fixed route length.
Third, among all $i,j$ pairs, $\Pr (i\leftrightarrow j)$ becomes worse for the $i,j$ pair of which $\lengthx{\overline{p(i,j)}}$ is the largest.

\begin{figure}[htb] 
\begin{center} 
\includegraphics[width=8cm,clip]{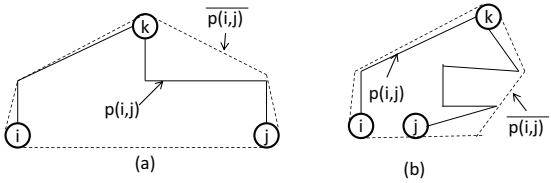} 
\caption{$\Pr(i\leftrightarrow j)$ for single route} 
\label{single-route} 
\end{center} 
\end{figure}

Equation (\ref{sl}) can provide $\Pr(i\leftrightarrow k\leftrightarrow j)$ by replacing $p(i,j)$ with $p(i,k)\cup p(k,j)$ if $p(i,k)$ and $p(k,j)$ are single route paths.
Particularly when $n_k\in p(i,j)$, $p(i,j)=p(i,k)\cup p(k,j)$.

\begin{result}
Between $n_i$ and $n_j$, there is a single route for $j=j_1, j_2, \cdots$.
Let $P_a$ be the probability that there is connectivity between $n_i$ and all these nodes $n_{j_1},n_{j_2},\cdots$.
Then, 
\begin{equation}
P_a=\frac{\lengthx{A_0}-\lengthx{\overline{\cup_{j=j_1, j_2, \cdots}  p(i,j)}}}{2\lengthx{A_0}+2\pi w}.\label{P_l}
\end{equation}
\end{result}

\begin{proof}
The event $\cap_{j=j_1, j_2, \cdots}\{i\leftrightarrow j\}$ is equivalent to $(\cup_{j=j_1, j_2, \cdots}p(i,j))\cap R_{G_B}=\emptyset$.
According to Eq. (\ref{pr-1}), we obtain Eq. (\ref{P_l}).
\end{proof}
It is likely that $\lengthx{\overline{\cup_{j=j_1, j_2, \cdots}  p(i,j)}}$ is large when a network covers a physically wide area.
Thus, it is difficult to maintain the connectivity of the entire network during a disaster if the network covers a physically wide area.
The difference in $P_a$ for a small network and that for a large network is proportional to the difference in $\lengthx{\overline{\cup_{j=j_1, j_2, \cdots}  p(i,j)}}$ for these two networks.

\subsection{Ring-type network}
\begin{result}\label{ring-type-path}
Assume that $p(i,j)$ is a ring-type network.
Nodes $n_i$ and $n_j$ are connected if at least one clockwise route or counterclockwise route of this network is maintained.
Then, $\{n_i\leftrightarrow n_j\}$ if and only if $\{\overline{(n_i,n_j)}\cap R_{G_B}=\emptyset\}$, and its probability is given as follows. 
\begin{equation}
\Pr(i \leftrightarrow j)=\frac{\lengthx{A_0}-\lengthx{\overline{(n_i,n_j)}}}{2\lengthx{A_0}+2\pi w}.\label{xy}
\end{equation}
This probability does not increase even if additional routes are provided between nodes on this ring-type network (Fig. \ref{high-connect}).
\end{result}

\begin{figure}[htb] 
\begin{center} 
\includegraphics[width=8cm,clip]{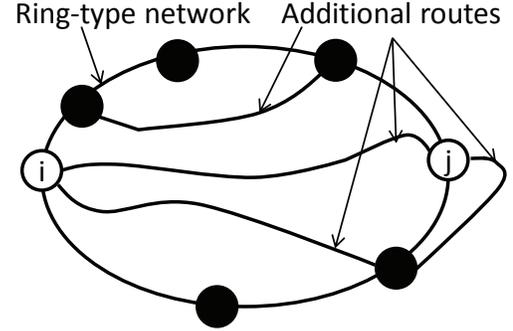} 
\caption{Higher connectivity by additional routes} 
\label{high-connect} 
\end{center} 
\end{figure}

\begin{proof}
First, assume that $p(i,j)$ is a ring-type network. That is, no additional routes are assumed to be provided. If $\{i\not\leftrightarrow j\}$, (i) $n_i$ and $n_j$ are in $R_{G_B}$, (ii) either $n_i$ or $n_j$ is in $R_{G_B}$, or (iii) neither $n_i$ nor $n_j$ is in $R_{G_B}$, but there exist a link  in $R_{G_B}$ on the clockwise route and another link in $R_{G_B}$ on the counter-clockwise route.
Due to the assumption of the convexity of $p(i,j)$ and the fact that $R_{G_B}$ is a half plane, case (iii) does not occur.
It is also clear that $\{\overline{(n_i,n_j)}\cap R_{G_B}\neq\emptyset\}$ for cases (i) and (ii).
Thus, $\{i\not\leftrightarrow j\}\Rightarrow \{\overline{(n_i,n_j)}\cap R_{G_B}\neq\emptyset\}$.
Equivalently, $\{\overline{(n_i,n_j)}\cap R_{G_B}=\emptyset\}\Rightarrow\{i\leftrightarrow j\}$.

Conversely, if $\{\overline{(n_i,n_j)}\cap R_{G_B}\neq\emptyset\}$, $\{\overline{(n_i,n_j)}\subset R_{G_B}\}$ or $\{\overline{(n_i,n_j)}\cap G_B\neq \emptyset\}$.
Note $\{\overline{(n_i,n_j)}\subset R_{G_B}\}$ means that $\{i\not\leftrightarrow j\}$.
In addition, $\{\overline{(n_i,n_j)}\cap G_B\neq \emptyset\}$ means that either $n_i$ or $n_j$ is in $R_{G_B}$ and $\{i\not\leftrightarrow j\}$.
Therefore, $\{\overline{(n_i,n_j)}\cap R_{G_B}\neq\emptyset\}\Rightarrow\{i\not\leftrightarrow j\}$.
Consequently, $\{i\leftrightarrow j\}\Rightarrow \{\overline{(n_i,n_j)}\cap R_{G_B}=\emptyset\}$.

Hence, $\{i\leftrightarrow j\}$ if and only if $\{\overline{(n_i,n_j)}\cap R_{G_B}=\emptyset\}$.

According to Eq. (\ref{pr-1}) with $C=\overline{(n_i,n_j)}$, $\Pr(i\leftrightarrow j)=\frac{\lengthx{A_0}-\lengthx{\overline{(n_i,n_j)}}}{2\lengthx{A_0}+2\pi w}$.

Second, assume that additional routes are provided.
It is almost clear that additional routes do not increase $\Pr(n_i\leftrightarrow n_j)$.
If $i\not\leftrightarrow j$ without additional routes, at least one of $n_i,n_j$ is in $R_{G_B}$.
This is because case (iii) does not occur, as mentioned above.
Therefore, no additional routes increase $\Pr(n_i\leftrightarrow n_j)$, because at least one of $n_i,n_j$ is in $R_{G_B}$. 
\end{proof}

Equation (\ref{xy}) tells us that we cannot change $\Pr(i\leftrightarrow j)$ by changing the physical route of the ring-type network for fixed locations of $n_i$ and $n_j$.
This is because the right-hand side of Eq. (\ref{xy}) does not depend on $p(i,j)$.

Comparing Eq. (\ref{xy}) with Eq. (\ref{sl}), we find that the effect of the clockwise and counterclockwise routes on $\Pr (i\leftrightarrow j)$ for a ring-type network is equal to the replacement of the ring-type $p(i,j)$ with the single straight-line-segment physical route between $n_i$ and $n_j$.
This is actually intuitive.
Due to the convexity of the ring-type network $p(i,j)$ and the fact that the boundary of the disaster area model $R_{G_B}$ is a line, the event that the ring-type network $p(i,j)$ intersects $R_{G_B}$ is identical to the event that the straight-line segment between $n_i$ and $n_j$ intersects $R_{G_B}$ (Fig. \ref{convex-ring}).

\begin{figure}[htb] 
\begin{center} 
\includegraphics[width=8cm,clip]{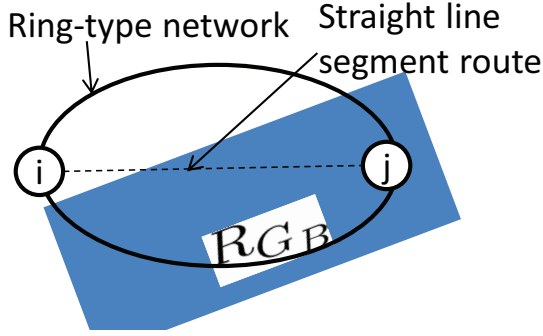} 
\caption{Ring-type network intersecting $D=R_{G_B}$} 
\label{convex-ring} 
\end{center} 
\end{figure}

\subsection{Combination of single route path and ring-type network}
An actual network is not as simple as a tree or a ring, and uses mechanisms to improve the availability of the network.
In this subsection, the results on more practical situations are discussed.

The latter part of the following result is useful to evaluate cases in which $n_k$ is equipped with a higher-layer switch or a server.

\begin{result}\label{comb1}
Assume that path configuration $p(i,j)$ between $n_i$ and $n_j$ consists of two parts: the first part $p(i,k)$ between $n_i$ and $n_k$ is a single route path, and the second part $p(k,j)$ between $n_k$ and $n_j$ is a ring-type network where $p(k,j)$ is convex (Fig. \ref{combination}-(a)).
The probability $\Pr(i\leftrightarrow j)$ that there is connectivity between $n_i$ and $n_j$ is given by  
\begin{equation}
\Pr (i\leftrightarrow j)=\frac{\lengthx{A_0}-\lengthx{\overline{p(i,k)\cup n_j}}}{2\lengthx{A_0}+2\pi w}.\label{sl2}
\end{equation}

Assume that path configuration $p(i,j)$ between $n_i$ and $n_j$ consists of four parts: the first part $p(i,k_1)$ between $n_i$ and $n_{k_1}$ and the fourth part $p(k_2,j)$ between $n_{k_2}$ and $n_j$ are single route paths, and the second and third parts $p(k_1,k)$ and $p(k,k_2)$ are convex ring-type networks on the same ring network (Fig. \ref{combination}-(b)).
The probability $\Pr(i\leftrightarrow j)$ is given by  
\begin{eqnarray}
\Pr(i\leftrightarrow j)
&=&\frac{\lengthx{A_0}-\lengthx{\overline{p(i,k_1)\cup n_k\cup p(k_2,j)}}}{2\lengthx{A_0}+2\pi w }\label{pij2}
\end{eqnarray}

\end{result}
\begin{proof}
The event $\{i\leftrightarrow j\}$ is equivalent to the joint event of $\{i\leftrightarrow k\}$ and $\{k\leftrightarrow j\}$ for case (a).
The event $\{i\leftrightarrow k\}$ is equivalent to $\{p(i,k)\cap R_{G_B}=\emptyset\}$.
Due to the definition of the convex hull, this is equivalent to $\{\overline{p(i,k)}\cap R_{G_B}=\emptyset\}$.
The event $\{k\leftrightarrow j\}$ is equivalent to $\{\overline{(n_k,n_j)}\cap R_{G_B}=\emptyset\}$ because of Result \ref{ring-type-path}.
Note that $\{\overline{p(i,k)}\cap R_{G_B}=\emptyset\}\cap\{\overline{(n_k,n_j)}\cap R_{G_B}=\emptyset\}$ is equivalent to $\{\overline{p(i,k)\cup n_j}\cap R_{G_B}=\emptyset\}$.
By applying Eq. (\ref{pr-1}), we obtain Eq. (\ref{sl2}).

The event $i\leftrightarrow k_1\leftrightarrow k \leftrightarrow k_2\leftrightarrow j$ is equivalent to the event $\{p(i,k_1)\cap R_{G_B}=\emptyset\}\cap\{k_1\leftrightarrow k\}\cap\{k\leftrightarrow k_2\}\cap\{p(k_2,j)\cap R_{G_B}=\emptyset\}$.
According to Result \ref{ring-type-path}, $\{k_1\leftrightarrow k\}$ is equivalent to $\{\overline{(n_{k_1},n_k)}\cap R_{G_B}=\emptyset\}$.
That is, $i\leftrightarrow k_1\leftrightarrow k \leftrightarrow k_2\leftrightarrow j$ is equivalent to $\{p(i,k_1)\cup p(j,k_2)\cup \overline{(n_{k_1},n_k)}\cup \overline{(n_k,n_{k_2})}\}\cap R_{G_B}=\emptyset$.
According to Eq. (\ref{pr-1}), we obtain Eq. (\ref{pij2}).
\end{proof}

\begin{figure}[htb] 
\begin{center} 
\includegraphics[width=8cm,clip]{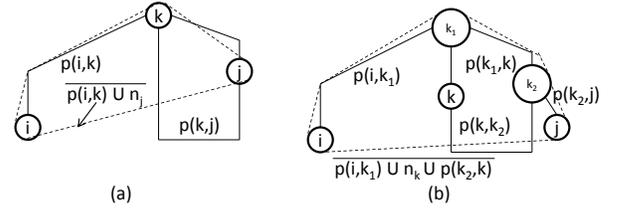} 
\caption{$\Pr(i\leftrightarrow j)$ for combination of single route path and ring-type network} 
\label{combination} 
\end{center} 
\end{figure}

\begin{result}
Assume that path configuration $p(i,j)$ between $n_i$ and $n_j$ consists of two parts: the first part $p(i, k(j))$ is a ring-type network between $n_i$ and $n_{k(j)}$, and the second part $p(k(j),j)$ is a single route for $j=j_1, j_2, \cdots$.
Let $P_a$ be the probability that there is connectivity between $n_i$ and all of these nodes $n_{j_1},n_{j_2},\cdots$.
Then, 
\begin{equation}
P_a=\frac{\lengthx{A_0}-\lengthx{\overline{\cup_{j=j_1, j_2, \cdots}p(k(j),j)\cup n_i}}}{2\lengthx{A_0}+2\pi w},\label{P_r}
\end{equation}

\end{result}
\begin{proof}
The event $\cap_{j=j_1, j_2, \cdots}\{i\leftrightarrow j\}$ is equivalent to $(\cup_{j=j_1, j_2, \cdots}p(k(j),j)\cup\overline{(k(j),i)})\cap R_{G_B}=\emptyset$.
By using Eq. (\ref{pr-1})  and $\cup_{j=j_1, j_2, \cdots}p(k(j),j)\cup\overline{(k(j),i)}=\overline{\cup_{j=j_1, j_2, \cdots}p(k(j),j)\cup n_i}$, we obtain Eq. (\ref{P_r}).
\end{proof}

How using a backup node $n_{j_2}$ improves the probability of maintaining connectivity is now discussed.

\begin{result}\label{backup1}
Assume that there are two nodes $n_{j_1}$ and $n_{j_2}$ on a ring network.
The path configurations $p(i,j_1)$ and $p(i,j_2)$ consist of two parts: the first part ($p(i,k)$) is a single route path, and the second part ($p(k,j_1)$ or $p(k,j_2)$) is a ring-type network.
The path configurations $p(k,j_1)$ and $p(k,j_2)$ form the same convex ring-type network.
The probability $\Pr (i\leftrightarrow j_1\ {\rm or}\ j_2)$ that there is connectivity between $n_i$ and at least one of $n_{j_1}$ and $n_{j_2}$ is given by  
\begin{eqnarray}
&&\Pr (i\leftrightarrow j_1\ {\rm or}\ j_2)\cr
&=&\frac{1}{2\lengthx{A_0}+2\pi w}\{\lengthx{A_0}-\lengthx{\overline{p(i,k)\cup n_{j_1}}}-\lengthx{\overline{p(i,k)\cup n_{j_2}}}\cr
&&\quad\quad\quad\quad\quad\quad+\lengthx{\overline{p(i,k)\cup n_{j_1} \cup n_{j_2}}}\}.\label{two-lc}
\end{eqnarray}

Assume that there are nodes $n_{j_1},n_{j_2},n_{j_3},\cdots, n_k$ on a ring network and that $n_{j_1}$ and $n_{j_2}$ are the nearest destination nodes on the clockwise route and counterclockwise route on the ring network from $n_i$, respectively (Fig. \ref{backup1}). Then,
\begin{equation}
\Pr (i\leftrightarrow j_1, j_2,\cdots, {\rm or}\ j_k)=\Pr (i\leftrightarrow j_1\ {\rm or}\ j_2).\label{multiple-lc}
\end{equation}

\end{result}

\begin{figure}[htb] 
\begin{center} 
\includegraphics[width=8cm,clip]{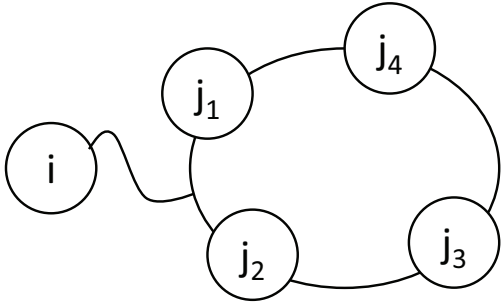} 
\caption{$\Pr (i\leftrightarrow j_1, j_2,\cdots, {\rm or}\ j_k)$}
\label{backup1} 
\end{center} 
\end{figure}

\begin{proof}
The event that there is connectivity between $n_i$ and one of $n_{j_1}$ and $n_{j_2}$ is equivalent to the event $ \{ (p(i,k)\cup \overline{(n_k,n_{j_1})}) \cap R_{G_B}=\emptyset\}\ {\rm or}\ \{ (p(i,k)\cup \overline{(n_k,n_{j_2})})\cap R_{G_B}=\emptyset\}$.
To evaluate the measure of this event, we need to divide it into two subevents and remove the overlap: the first subevent is $\{(p(i,k)\cup \overline{(n_k,n_{j_1})})\cap R_{G_B}=\emptyset\}$ and the second subevent is $\{(p(i,k)\cup \overline{(n_k,n_{j_2})})\cap R_{G_B}=\emptyset\}$.
Because their overlap is $\{(p(i,k)\cup\overline{(n_k,n_{j_1})}\cup \overline{(n_k,n_{j_2})})  \cap R_{G_B}=\emptyset\}$, $m(B; i\leftrightarrow j_1\ {\rm or}\ j_2)=m(B; (p(i,k)\cup \overline{(n_k,n_{j_1})})\cap R_{G_B}=\emptyset)+m(B; (p(i,k)\cup \overline{(n_k,n_{j_2})})\cap R_{G_B}=\emptyset)-m(B; (p(i,k)\cup\overline{(n_k,n_{j_1})}\cup \overline{(n_k,n_{j_2})}) \} \cap R_{G_B}=\emptyset$.
Similar to Eq. (\ref{pr-1}), we obtain Eq. (\ref{two-lc}).

If there is connectivity between $n_i$ and one of nodes $n_{j_3},n_{j_4},\cdots$, there is connectivity between $n_i$ and at least $n_{j_1}$ and $n_{j_2}$.
Therefore, Eq. (\ref{multiple-lc}) is valid.
\end{proof}

Equation (\ref{two-lc}) tells us that the effect of a backup node on $\Pr (i\leftrightarrow j_1\ {\rm or}\ j_2)$ is $\{-\lengthx{\overline{p(i,k)\cup n_{j_2}}}+\lengthx{\overline{p(i,k)\cup n_{j_1} \cup n_{j_2}}}\}/(2\lengthx{A_0}+2\pi w)$.
This effect is numerically evaluated with numerical examples discussed in a later section.
Equation (\ref{multiple-lc}) shows that more than two backup nodes are meaningless for a given $i$.

The following result is an extension of the latter part of Result \ref{comb1}.
That is, it covers cases in which a higher layer switch or a server at $n_{k_1}$ has a backup $n_{k_2}$.

\begin{result}
Assume that there are two nodes $n_{k_1}$ and $n_{k_2}$ on a ring network.
To connect between $n_i$ and $n_j$, one of them must be connected.
The path configuration $p(i,j)$ consists of three parts (Fig. \ref{backup}): the first and third parts $p(i,i_0)$ $p(j_0,j)$ are single route paths, and the second part $p(i_0,j_0)$ is a little bit complicated.
When $n_{k_1}$ ($n_{k_2}$) is used, the second part $p(i_0,j_0)$ consists of two ring-type networks $p(i_0,k_1)$ and $p(k_1,j_0)$ ($p(i_0,k_2)$ and $p(k_2,j_0)$).
The probability that there is connectivity between $n_i$ and $n_j$ through either $n_{k_1}$ or $n_{k_2}$ is given as follows.
\begin{eqnarray}
&&\Pr(i\leftrightarrow i_0\leftrightarrow (k_1\ {\rm or}\ k_2) \leftrightarrow j_0\leftrightarrow j)\cr
&=&\frac{1}{2\lengthx{A_0}+2\pi w}\{\lengthx{A_0}-\lengthx{\overline{p(i,i_0)\cup p(j,j_0)\cup n_{k_1}}}\cr
&&\qquad\qquad\qquad-\lengthx{\overline{p(i,i_0)\cup p(j,j_0)\cup n_{k_2}}}\cr
&&\qquad\qquad+\lengthx{\overline{p(i,i_0)\cup p(j,j_0)\cup n_{k_1}\cup n_{k_2}}}\}\label{pij22}
\end{eqnarray} 
\end{result}

\begin{figure}[htb] 
\begin{center} 
\includegraphics[width=8cm,clip]{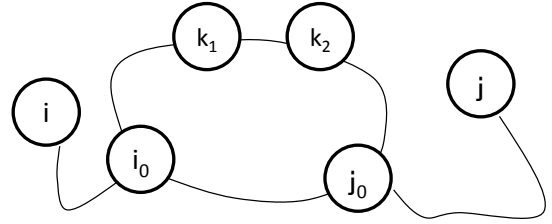} 
\caption{$\Pr (i\leftrightarrow j)$ through either $n_{k_1}$ or $n_{k_2}$} 
\label{backup} 
\end{center} 
\end{figure}

\begin{proof}
The event in which there is connectivity between $n_i$ and $n_j$ through either $n_{k_1}$ or $n_{k_2}$ is equivalent to $\{ i\leftrightarrow i_0\leftrightarrow (k_1\ {\rm or}\ k_2) \leftrightarrow j_0\leftrightarrow j\}$.
To evaluate the measure of this event, divide this event into two subevents and remove the overlap: the first subevent is $\{ i\leftrightarrow i_0\leftrightarrow k_1 \leftrightarrow j_0\leftrightarrow j\}$ and the second subevent is $\{ i\leftrightarrow i_0 \leftrightarrow k_2 \leftrightarrow j_0\leftrightarrow j\}$.
The measures of these two subevents are given by the numerator of Eq. (\ref{pij2}).
The overlap is $\{ i\leftrightarrow i_0\leftrightarrow k_1 \leftrightarrow j_0\leftrightarrow j\}\cap\{ i\leftrightarrow i_0 \leftrightarrow k_2 \leftrightarrow j_0\leftrightarrow j\}$.
This is identical to $\{(p(i,i_0)\cup p(j,j_0)\cup  n_{k_1}\cup n_{k_2})\cap R_{G_B}=\emptyset$.
Therefore, the measure of this overlap is $(\lengthx{A_0}-\lengthx{\overline{p(i,i_0)\cup p(j,j_0)\cup n_{k_1}\cup n_{k_2}}})$.
As a result, we obtain Eq. (\ref{pij22}).
\end{proof}
We learn that the effect of a backup node on the probability that there is connectivity between $n_i$ and $n_j$ through at least one $n_{k_1}$ or $n_{k_2}$ is quite similar to that on $\Pr (i\leftrightarrow j_1\ {\rm or}\ j_2)$.
Replacing a single route path $p(i,k)$ in Eq. (\ref{two-lc}) with the union of two single route paths $p(i,i_0)\cup p(j,j_0)$ gives us Eq. (\ref{pij22}).

\subsection{Number of nodes losing connectivity}
Let us discuss how we minimize the expected number of subscribers or nodes losing connectivity.

\begin{result}
Let $S$ be a set of nodes and $\sharp(i \not\leftrightarrow  S)$ be the number of nodes in $S$ disconnected from $n_i$.
If $p(i,j)$ is a single route path for $n_j\in S$,
\begin{eqnarray}
E[\sharp(i \not\leftrightarrow  S)]
&=&\frac{\sum_{n_j\in S} (\lengthx{A_0}+\lengthx{\overline{p(i,j)}}+2\pi w)}{2\lengthx{A_0}+2\pi w}.\label{number}\cr
&&
\end{eqnarray}
If $p(i,j)$ is a ring-type network for $n_j\in S$,
\begin{eqnarray}
E[\sharp(i \not\leftrightarrow  S)]
&=&\frac{\sum_{n_j\in S} (\lengthx{A_0}+\lengthx{\overline{(n_i,n_j)}}+2\pi w)}{2\lengthx{A_0}+2\pi w}.\label{number2}\cr
&&
\end{eqnarray}
\end{result}
\begin{proof}
Note that $E[\sharp(i \not\leftrightarrow  S)]=\sum_{n_j\in S} E[\bfone (i\not\leftrightarrow j)]=\sum_{n_j\in S} \Pr(i\not\leftrightarrow j)$ where $\bfone(x)=\cases{1, &if $x$ is true, \cr 0, &otherwise.}$
Then, we obtain Eq. (\ref{number}) through Eq. (\ref{sl}).
Similarly, we can obtain Eq. (\ref{number2}) by using Eq. (\ref{xy}).
\end{proof}
Therefore, to reduce $E[\sharp(i \not\leftrightarrow  S)]$, reduction of $\sum_j\lengthx{\overline{p(i,j)}}$ is necessary for a single route.
Thus, its physical route that minimizes $E[\sharp(i \not\leftrightarrow  S)]$ is identical to every $p(i,j)$ being on a straight line when the locations of the nodes are fixed.
In practice, because there are many constraints, such as alignment with the road and minimizing cost (including minimizing the total length of the cable and maximizing the number of fibers in use in a cable) in determining the physical route, the straight-line route is difficult.
By choosing the physical route that minimizes $\lengthx{\overline{p(i,j)}}$ among the feasible routes, however, we can reduce $E[\sharp(i \not\leftrightarrow  S)]$.
In particular, even when the length of the route is the same, we can reduce $\lengthx{\overline{p(i,j)}}$ and, as a result, $E[\sharp(i \not\leftrightarrow  S)]$.
Figure \ref{example} shows such an example. The lengths of Routes 1 and 2 are the same, but the perimeter length of the convex hull of Route 1 is shorter than that of Route 2.

\begin{figure}[htb] 
\begin{center} 
\includegraphics[width=8cm,clip]{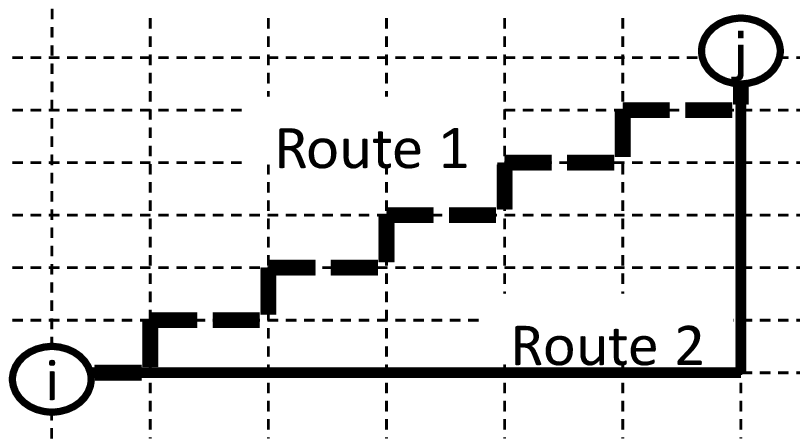} 
\caption{Example of reducing $E[\sharp(i \not\leftrightarrow  S)]$ with same route length} 
\label{example} 
\end{center} 
\end{figure}

A very good point of Eq. (\ref{number}) is that we can reduce $E[\sharp(i \not\leftrightarrow  S)]$ by reducing $\lengthx{\overline{p(i,j)}}$ independently of the path configuration of other source-destination pairs.
Hence, reducing $\lengthx{\overline{p(i,j)}}$ (as a result, reducing $E[\sharp(i \not\leftrightarrow  S)]$) can be implemented easily without burden of computation.

On the other hand, as described just after Result \ref{ring-type-path}, $E[\sharp(i \not\leftrightarrow  S)]$ does not depend on the physical route if the locations of $n_i$ and $n_j$ are fixed and $p(i,j)$ is a ring-type network.
Thus, we cannot reduce $E[\sharp(i \not\leftrightarrow  S)]$ by changing the physical route.

Although the above result does not refer to other types of path configuration other than a single route path or a ring-type network, it is trivial that we can obtain $E[\sharp(i \not\leftrightarrow  S)]$ if we know $\Pr(i \not\leftrightarrow  j)$ because $E[\sharp(i \not\leftrightarrow  S)]=\sum_{n_j\in S} \Pr(i\not\leftrightarrow j)$.

\section{Numerical examples}\label{large-n}
In the following numerical examples, $w=0$ is assumed for any case if not explicitly indicated otherwise.
This is because we can easily obtain the probability connecting two nodes for $w>0$ by $\frac{\lengthx{A_0}}{\lengthx{A_0}+2\pi w}\times$ (the probability connecting two nodes for $w=0$).

\subsection{Subscriber network}
By using Eq. (\ref{sl}), $\Pr(i\leftrightarrow l_c\leftrightarrow j)$ is evaluated for the network model shown in Fig. \ref{subs_models}-(a).
Here, $l_c$ is a local network center providing a local switch to connect two subscribers in this subscriber network, and is located at the bottom-left corner in this figure.
In addition, $\Pr(i\leftrightarrow l_c\leftrightarrow j)$ can be evaluated for the network model shown in Fig. \ref{subs_models}-(b) by using Eq. (\ref{pij2}).
It can also evaluated for the real subscriber network shown in Fig. \ref{real-subscriberNW}.

\begin{figure}[htb] 
\begin{center} 
\includegraphics[width=8cm,clip]{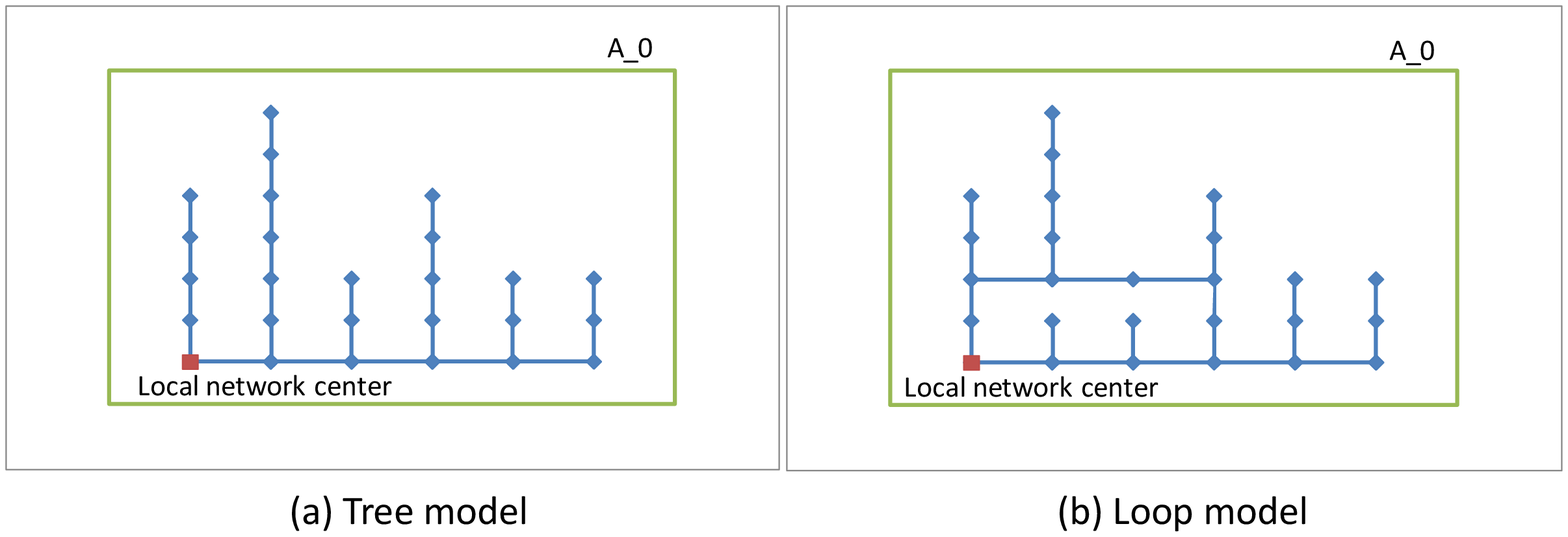} 
\caption{Subscriber network models} 
\label{subs_models} 
\end{center} 
\end{figure}

\begin{figure}[htb] 
\begin{center} 
\includegraphics[width=8cm,clip]{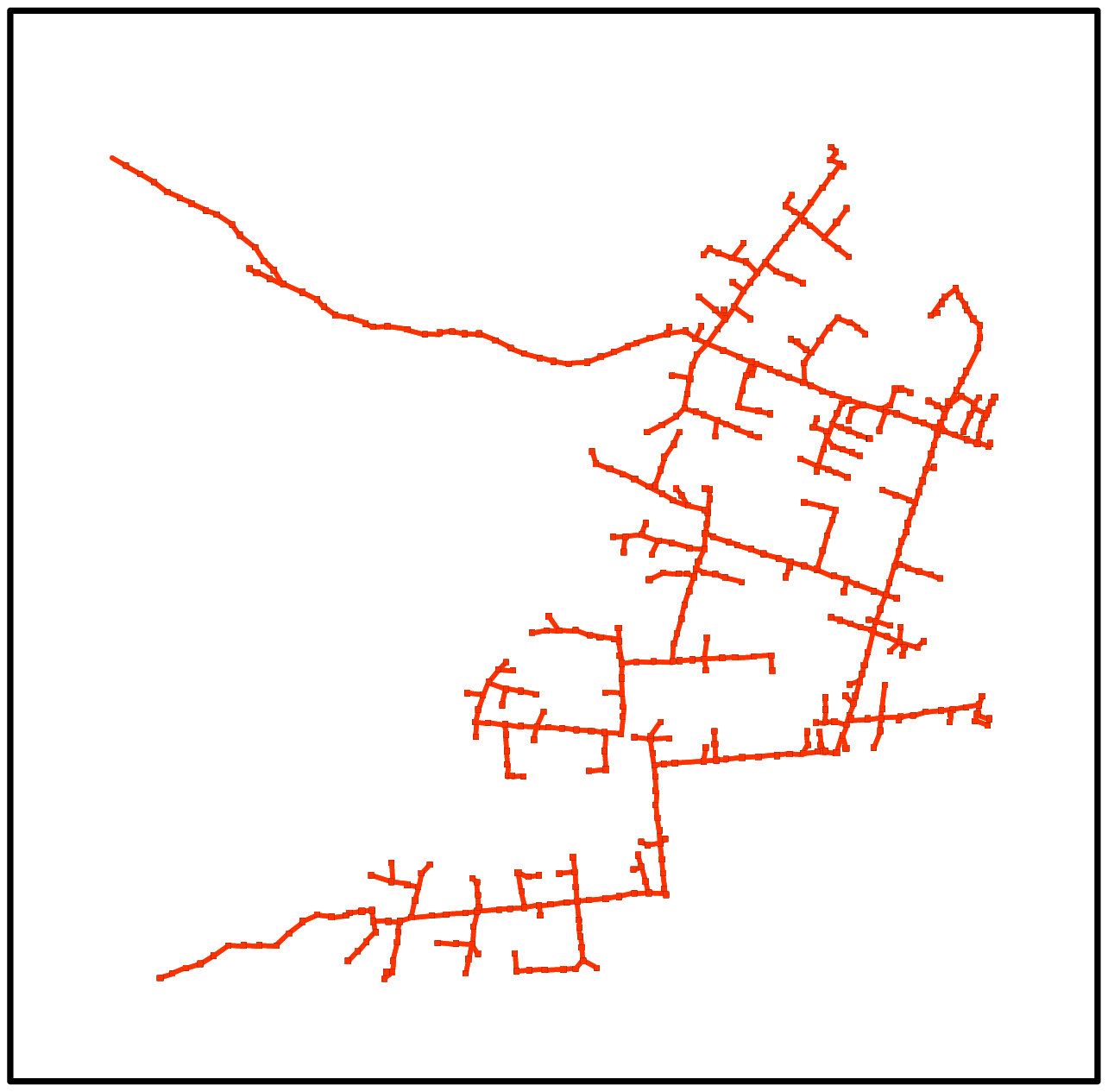} 
\caption{Example of real subscriber network} 
\label{real-subscriberNW} 
\end{center} 
\end{figure}

Figures \ref{subs_model_result} and \ref{real-sub-result} plot $\Pr(i\leftrightarrow l_c\leftrightarrow j)$ after sorting them in descending order of their exact values.
In these figures, ^^ ^^ Exact" is given by Eq. (\ref{sl}) for (a) and by Eq. (\ref{pij2}) for (b), ^^ ^^ Independent approx." is given by $\Pr(i\leftrightarrow l_c)\Pr(j\leftrightarrow l_c)$, each of which is derived by Eq. (\ref{sl}) for (a) and by Eq. (\ref{sl2}) for (b), ^^ ^^ Line approx." is given by Eq. (\ref{sl}) under the assumption that the physical route between $n_i$ and $n_j$ is given by a straight line (a flybird connection \cite{itc}, \cite{access}, so to speak), and ^^ ^^ Independent line approx." is given by $\Pr(i\leftrightarrow l_c)\Pr(j\leftrightarrow l_c)$ under the assumption that the physical route between $n_i$ ($n_j$) and $l_c$ is given by a straight line.
A simulation was conducted assuming that $D=R_{G_B}$ to confirm the verification of ^^ ^^ Exact."
Because the simulation results were in very good agreement with ^^ ^^ Exact", we cannot see ^^ ^^ Exact" overlapping with the simulation results.
(The average sizes of the 95\% confidence intervals of the simulation are 0.0182, 0.0183, and 0.0190 for the network shown in Figs. \ref{subs_models}-(a), (b), and \ref{real-subscriberNW}.)

 As expected, the independent approximation underestimates and the line approximation overestimates $\Pr(i\leftrightarrow l_c\leftrightarrow j)$.
This is because the independent approximation ignores the correlations that the two routes (between $n_i$ and $l_c$ and between $n_j$ and $l_c$) are damaged simultaneously and because the line approximation ignores the fact that the actual physical route is longer than the straight-line route.
Independent line approximation is almost similar to independent approximation.
Line approximation shows us the minimum $\Pr(i\leftrightarrow l_c\leftrightarrow j)$ by changing the physical route.
Its effect is not so large in this figure.
In addition, the effect of the introduction of the local loop can be evaluated by the difference between $\Pr(i\leftrightarrow l_c\leftrightarrow j)$ under (a) the tree model and that under (b) the loop model for each pair of $i$ and $j$.
Of course, the difference is always non-negative, but it is very small in this example.
It is less than three percent (its relative value is less than ten percent) for any pair of $i$ and $j$, and it is less than one percent on average (its relative value is less than two percent on average).

\begin{figure}[htb] 
\begin{center} 
\includegraphics[width=8cm,clip]{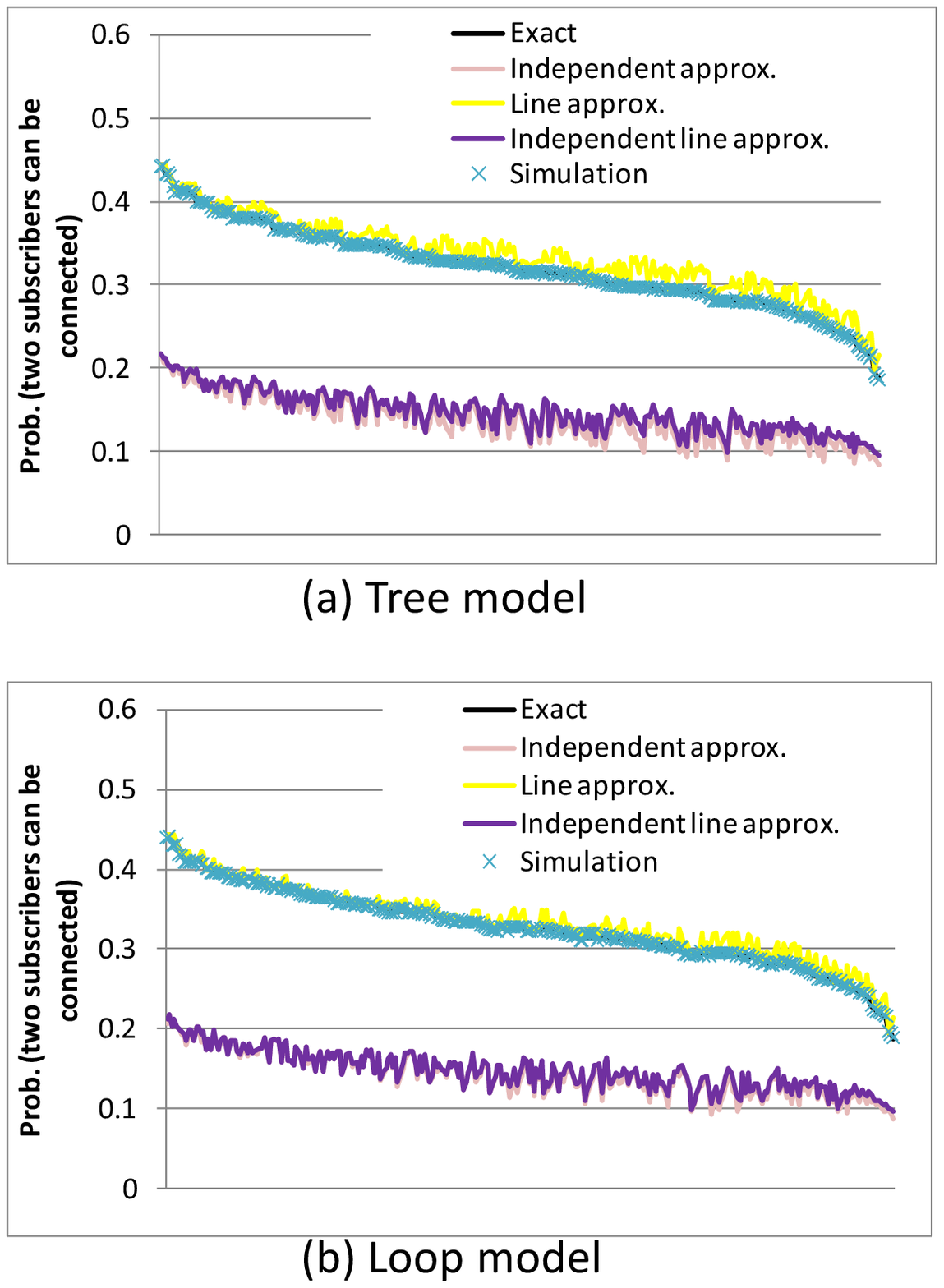} 
\caption{$\Pr(i\leftrightarrow l_c\leftrightarrow j)$ for subscriber network models} 
\label{subs_model_result} 
\end{center} 
\end{figure}

\begin{figure}[htb] 
\begin{center} 
\includegraphics[width=8cm,clip]{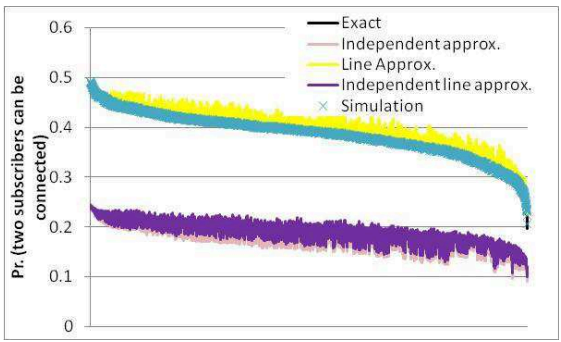} 
\caption{$\Pr(i\leftrightarrow l_c\leftrightarrow j)$ for real subscriber network} 
\label{real-sub-result} 
\end{center} 
\end{figure}

\subsection{Effect of backup node}
Based on Eqs. (\ref{sl2}) and (\ref{two-lc}), we evaluated the effect of a backup node for the model shown in Fig. \ref{two-center}.
A regional network center $r_c$ is located on a ring network, which is a circle with radius $r_l$. The angle formed by the line segment between $r_c$ and the center $O$ of the ring network and the reference line is $\alpha_1$.
When the other regional network center $r'_c$ as a backup is provided, the angle of the line passing $O$ and $r'_c$ and the reference line is $\alpha_2$.
There are multiple nodes $n_1,n_2,\cdots$ on the same ring network, where the angle between two consecutive nodes is $\gamma$.
Let $S$ be the set of these nodes.

Similar to derivation of Eq. (\ref{number}), $E1\equiv E[\sharp(r_c \not\leftrightarrow  S)]=\sum_{n_j\in S} (1-\Pr(r_c\leftrightarrow j))$ and $E2\equiv E[\sharp(r_c\ {\rm or}\ r_c' \not\leftrightarrow  S)]=\sum_{n_j\in S} (1-\Pr(r_c\ {\rm or}\ r_c' \leftrightarrow j))$.
Because $\Pr(r_c \leftrightarrow j)$ is given by Eq. (\ref{sl2}) with $r_c=i=k$ and $\Pr(r_c\ {\rm or}\ r_c' \leftrightarrow j))$ is given by Eq. (\ref{two-lc}) with $j=i=k$ and $r_c=j_1, r_c'=j_2$, we can numerically minimize $E1$ by changing the location of $r_c$ and $E2$ by changing the locations of $r_c$ and $r_c'$.

\begin{figure}[htb] 
\begin{center} 
\includegraphics[width=8cm,clip]{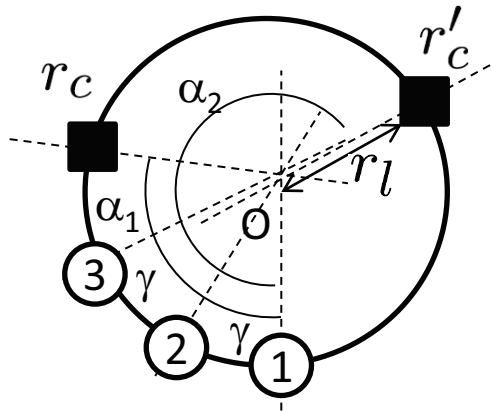} 
\caption{Model of two regional network centers} 
\label{two-center} 
\end{center} 
\end{figure}

$E1^*$ and $E2^*$, which are minimized $E1$ and $E2$, are plotted in Figure \ref{result-two-center}.
In this figure, (i) as the number of nodes increases, $E1^*-E2^*$ almost linearly increases, and (ii) as $r_l$ increases, $E1^*-E2^*$ slightly increases.
Therefore, for a network with many nodes or for a physically large network, we need to provide a backup node.
Although it can be determined whether the effect of a backup node is large because $E1^*-E2^*$ and $E1^*/E2^*$ depends on $\lengthx{A_0}$, the backup node does not seem able to drastically reduce the number of nodes disconnected from a regional network center.

\begin{figure}[htb] 
\begin{center} 
\includegraphics[width=8cm,clip]{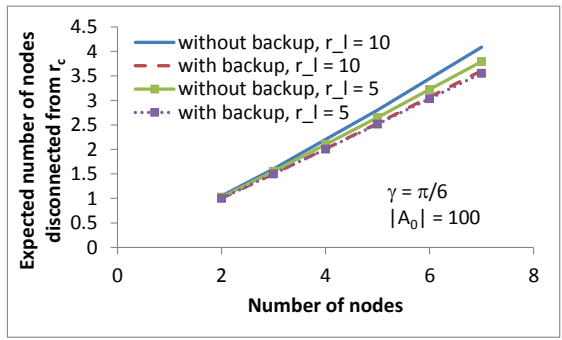} 
\caption{Effect of two regional network centers} 
\label{result-two-center} 
\end{center} 
\end{figure}

\subsection{Validity of strip model}\label{strip-section}
The boundary area of a disaster area $D$ is assumed to be a strip $B$ in the numerical examples in this section when $D$ is much larger than the distance between two nodes of interest.
This assumption may be valid in a macroscopic view, but it is not likely in a microscopic view.

To evaluate the impact of the micro shape of the boundary, a simulation was conducted that uses a sine wave as the boundary of $D$.
Figure \ref{error_wave} plots the mean relative absolute error $\sum_{i\neq j}e(i,j)/\sum_{i\neq j}1$ for two cases (tree model in Fig. (\ref{subs_models}) and real subscriber network shown in Fig. \ref{real-subscriberNW}), where $e(i,j)$ is the relative absolute error defined by $e(i,j)\equiv |p_{simulation}(i,j)-p_{theory}(i,j)|/p_{theory}(i,j)$. 
Here, $p_{theory}(i,j)$ is $\Pr(i\leftrightarrow l_c \leftrightarrow j)$ obtained by Eq. (\ref{sl}), and $p_{simulation}(i,j)$ is that obtained by simulation when the sine wave with amplitude $w/2$ and wavelength $\lambda$ is used as the boundary.
The unit length of this graph is the mean distance between two individual subscribers.
The two graphs in Fig. \ref{error_wave} look similar.
Therefore, the characteristics shown here seem valid for many cases. (i) If $\lambda \lessapprox 0.5$ or $w< 1$, the strip model is valid. That is, the strip model can cover the boundary bump occurring in a period shorter than half the mean distance between two individual nodes or bumps smaller than half the mean distance between two individual nodes. We can ignore the variation in the boundary smaller than or within an interval shorter than half the mean distance between two individual nodes. (ii) The relative absolute error rapidly becomes large when $w\gtrapprox 1$ and $\lambda\gtrapprox 0.5$. Therefore, if variations in the boundary satisfy $w\gtrapprox 1$ and $\lambda\gtrapprox 0.5$, we need the model to take into account this variation.

\begin{figure}[htb] 
\begin{center} 
\includegraphics[width=8cm,clip]{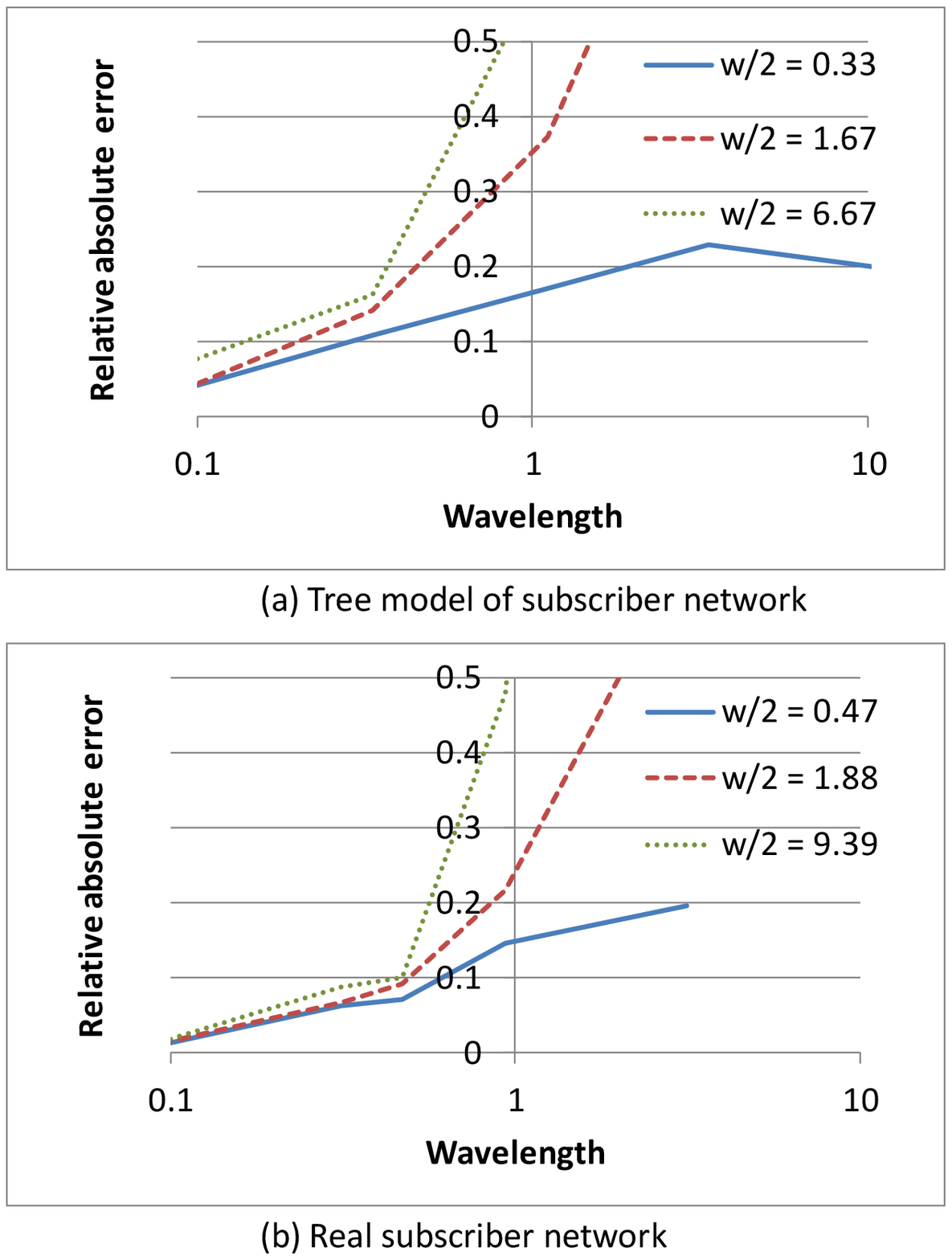} 
\caption{Relative absolute error of $\Pr(i\leftrightarrow l_c \leftrightarrow j)$ under sine wave boundary of disaster}\label{error_wave} 
\end{center} 
\end{figure}

\section{Conclusion}
One geometric model was proposed for evaluating the impact of a disaster on a network and was analyzed through integral geometry (geometric probability).
The validity of this model was evaluated through simulation.  
The simulation results show that the proposed model is valid when the bumps are small or the interval of two bumps are short compared with the mean distance between two nodes.

Performance metrics were derived, such as the probability of maintaining connection between two nodes, as explicit functions of physical route shape.
They are linear functions of perimeter lengths of convex hulls determined by physical route shape.

The results showed the following rules of thumb for designing a network robust against disasters when the disaster area is much larger than the (sub)network of interest:
(1) Reducing the convex hull of the path configuration reduces the expected number of nodes that cannot connect to the destination. 
(2) For two given nodes $n_i,n_j$ on a ring-type network, $\Pr(i\leftrightarrow j)$ is independent of the physical route of the ring-type network. That is, we cannot change $\Pr(i\leftrightarrow j)$ by changing the physical route.
(3) When a single route is provided between two nodes, the straight-line route maximizes the probability maintaining the connectivity between them irrespective of the disaster area.
When it is difficult to adopt the straight-line route, the effect of introducing a loop is identical to that.

It is assumed that a disaster area can be modeled by an area randomly placed around a network.
This assumption is based on the consideration that we cannot forecast the locations or shapes for many types of disasters.
In particular, the relative location of the disaster area and subnetworks is not known.
Therefore, under no such prior information, the proposed model is valid as a first-step approximation.
However, for a specific type of disaster, we may obtain information about the disaster location.
For such cases, an assessment method for network survivability using such information is for further study. 
In addition, the assumption of convexity of the ring-type network and the assumption that no network elements function in $D$ are also first-step approximations.
The results under the relaxed assumptions for these assumptions will be presented in the near future.

\begin{biography}
{Hiroshi Saito} graduated from the University of Tokyo with a B.E. degree in Mathematical Engineering in 1981, an M.E. degree in Control Engineering in 1983, and a Dr.Eng. in Teletraffic Engineering in 1992.

He joined NTT in 1983. He is currently an Executive Research Engineer at NTT Service Integration Labs. He received the Young Engineer Award of the Institute of Electronics, Information and Communication Engineers (IEICE) in 1990, the Telecommunication Advancement Institute Award in 1995 and 2010, and the excellent papers award of the Operations Research Society of Japan (ORSJ) in 1998. He served as an editor and a guest editor of technical journals such as Performance Evaluation, IEEE Journal of Selected Areas in Communications, and IEICE Trans. Communications, the organizing committee chairman of a few international conferences, and a TPC member of more than 30 international conferences. He is currently an editorial board member of Computer Networks, and the director of Journal and Transactions of IEICE. Dr. Saito is a fellow of IEEE, IEICE, and ORSJ, and a member of IFIP WG 7.3. His research interests include traffic technologies of communications systems, network architecture, and ubiquitous systems.

More information can be found at http://www9.plala.or.jp/hslab.
\end{biography}


\begin{thebibliography}{99}
\bibitem{ntt}http://www.ntt.co.jp/news2011/1103e/110330a.html.
\bibitem{china} Y. Ran, Considerations and Suggestions on Improvement of Communication
Network Disaster Countermeasures after the Wenchuan Earthquake, IEEE Communications Magazine, pp. 44-47, January 2011.

\bibitem{ntt-east} http://www.jasdis.gr.jp/06chousa/2nd/s4\_siryo.pdf (in Japanese).

\bibitem{book} M. Hayashi and T. Abe, Network Reliability, IEICE, Tokyo, 2010 (in Japanese).
\bibitem{geography} T. H. Grubesic, M. E. O'Kelly, and A. T. Murray, A geographic perspective on commercial Internet survivability, Telematics and Informatics, 20, pp. 51-69, 2003.

\bibitem{survivability} S. C. Liew, and K. W. Lu, A Framework for Characterizing Disaster-Based Network Survivability, IEEE J. Selected Areas in Communications, 12, 1, pp. 52-58, 1994.

\bibitem{OR} D. Bienstock, Some generalized max-flow min-cut problems in the plane, Math. Oper. Res., 16,2, pp. 310–333, 1991.
\bibitem{underseaCableFailure}W. Wu, B. Moran, J. Manton, and M. Zukerman, Topology design of undersea cables considering survivability under major disasters, WAINA2009, 2009.
\bibitem{sen} A. Sen, B. H. Shen, L. Zhou and B. Hao, Fault-Tolerance in Sensor Networks: A New Evaluation Metric, INFOCOM 2006, 2006.
\bibitem{discMinCut} S. Neumayer, A. Efrat, and E. Modiano, Geographic Max-Flow and Min-Cut Under a Circular Disk Failure Model, INFOCOM 2012, 2012
\bibitem{wdmFailure} P. Agarwal, A. Efrat, S. Ganjugunte, D. Hay, S. Sankararaman, and G. Zussman, The resilience of wdm networks to probabilistic geographical failures, INFOCOM 2011, 2011.


\bibitem{failureToN} S. Neumayer,  G. Zussman, R. Cohen, and E. Modiano, Assessing the Vulnerability of the Fiber Infrastructure to Disasters, IEEE Trans. Networking, 19, 6. pp. 1610-1623, 2011.

\bibitem{failureINFOCOM} S. Neumayer and E. Modiano, Network Reliability With Geographically Correlated Failures, INFOCOM 2010, 2010.


\bibitem{infocom} H. Saito, K. Shimogawa, S. Shioda, and J. Harada, Shape Estimation Using Networked Binary Sensors, INFOCOM 2009, 2009. 


\bibitem{mobileComp} H. Saito, S. Tanaka, and S. Shioda, Estimating Parameters of Multiple Heterogeneous Target Objects Using Composite Sensor Nodes, IEEE Trans. Mobile Computing, 11, 1, pp. 125-138, 2012.

\bibitem{signal} H. Saito, S. Tanaka, and S. Shioda, Stochastic Geometric Filter and Its Application to Shape Estimation for Target Objects, IEEE Trans. Signal Processing, 59, 10, pp. 4971-4984, 2011.

\bibitem{detection} L. Lazos, R. Poovendran, J. A. Ritcey, Probabilistic Detection of Mobile Targets in Heterogeneous Sensor Networks, IPSN07, pp. 519--528, 2007. 

\bibitem{lazos} L. Lazos and R. Poovendran, Stochastic Coverage in Heterogeneous Sensor Networks, ACM Transactions on Sensor Networks, 2, 3, pp. 325--358, 2006. 

\bibitem{Santalo} L. A. Santal\'o, Integral Geometry and Geometric Probability, Second edition. Cambridge University Press, Cambridge, 2004. 

\bibitem{routing} S. Kwon and N. B. Shroff, Analysis of Shortest Path Routing for Large Multi-Hop Wireless Networks, IEEE/ACM Trans. Networking, 17, 3, pp. 857--869, 2009. 

\bibitem{energy} W. Choi and S. K. Das, A Novel Framework for Energy-Conserving Data Gathering in Wireless Sensor Networks, INFOCOM 2005, pp. 1985--1996, 2005.

\bibitem{dictionary} Mathematics dictionary, edited by G. James and R. C. James, D. Van Nostrand Company, Princeton, New Jersey, 1966.

\bibitem{itc} C. Gloaguen, F. Voss, and Volker Schmidt, Parametric Distance Distributions for Fixed Access Network Analysis and Planning, 21st International Teletraffic Congress (ITC21), Paris, 2011.

\bibitem{access} C. Gloaguen, H. Schmidt, R. Thiedmannz, J. Lanquetiny, and V. Schmidt, Comparison of Network Trees in Deterministic and Random Settings using Different Connection Rules, Modeling and Optimization in Mobile, Ad Hoc and Wireless Networks and Workshops, 2007.

\end{thebibliography}
\end{document}